\newcommand{\conv}{\operatorname{conv}}
\newtheorem{theorem}{Theorem}
\newtheorem{lemma}[theorem]{Lemma}
\theoremstyle{definition}
\newtheorem{definition}[theorem]{Definition}
\newtheorem{example}[theorem]{Example}
\newtheorem{remark}[theorem]{Remark}
\numberwithin{equation}{section}
\numberwithin{theorem}{section}
\newcommand{\ignore}[1]{}
\begin{document}

\title{On the Core of Dynamic Cooperative Games}
\author{Ehud Lehrer%
\thanks{This author
acknowledges the support of the Israel Science Foundation, Grant
\#762/045.}\\
              School of Mathematical Sciences \\
              Tel Aviv University \\
              Tel Aviv 69978, Israel \\
              and 
              INSEAD \\
              Bd. de Constance \\
              F--77305 Fontainebleau Cedex, France.  \\
              \texttt{lehrer@post.tau.ac.il}  
\and 
Marco Scarsini \\
         Dipartimento di Economia e Finanza \\
         LUISS \\
	Viale Romania 12 \\
	I--00197 Roma, Italy. \\
	 \texttt{marco.scarsini@luiss.it}
}

\date{\today}

\maketitle

\thispagestyle{empty}

\pagebreak
 
\begin{abstract}
We consider dynamic cooperative games, where the worth of coalitions
varies over time according to the history of allocations. When
defining the core of a dynamic game, we allow the possibility for
coalitions to deviate at any time and thereby to give rise to a new
environment. A coalition that considers a deviation needs to take the 
consequences into account because from the deviation point on, the
game is no longer played with the original set of players. The
deviating coalition becomes the new grand coalition which, in turn,
induces a new dynamic game. The stage games of the new dynamical
game depend on all previous allocation including those that have
materialized from the deviating time on.

We define three types of core solutions: fair core, stable core and
credible core. We characterize the first two in case where the
instantaneous game depends on the last allocation (rather than on
the whole history of allocations) and the third in the general case.
The analysis and the results resembles to a great extent the theory 
of non-cooperative dynamic games.

\bigskip \bigskip

\noindent AMS 2000 Subject Classification: Primary 91A12, Secondary 91A25.

\bigskip \bigskip

\noindent {\emph{Journal of Economic Literature} classification
numbers:} C71.
\bigskip\bigskip

\noindent Keywords: fair core, stable core, credible core, convexification of a game.

\end{abstract}

\thispagestyle{empty} \vfill

\pagebreak  


\section{Introduction}\label{sec intro}

Noncooperative game theory has dedicated a lot of attention to dynamic games
and refinements of  Nash equilibrium have been studied to
capture the features that the dynamic induces in the game.
When the dynamic is obtained by simply
repeating a stage game over time, the folk theorem shows that the set
of equilibria in an infinitely repeated game is in general much larger
than the set of equilibria in the stage game.

In cooperative game theory most of the literature studies only
static situations: a game is played only once and its solution is a
set of suitable allocations that satisfies some conditions.

In this paper we consider a \emph{bona fide} dynamic version of a
cooperative game, where the worth of coalitions varies over time
according to the history of the game. In particular the worth of
coalitions at time $t$ depends on the allocations at all the times
before $t$.

When defining a solution concept we allow the possibility for
coalitions to deviate at any time and thereby to give rise to a new
environment. When a coalition deviates, from that point on, the game
is no longer played with the original set of players. The deviating
coalition becomes the new grand coalition which, in turn, induces a
new dynamic game. The stage games of the new dynamical game depend
on all previous allocations, including those that have materialized
from the deviating time on.

The existing literature on dynamic cooperative games considers games
that determine only the worth of any coalition in a stage game
played with the original grand coalition. However, in order to
accommodate the possibility of deviating coalitions that generate
new dynamical games, we need a richer structure. In the model of
dynamic cooperative games that we introduce, the grand coalition of
any stage game might be strictly smaller than the original grand
coalition, while the allocation history is adapted accordingly.

In this paper we focus on dynamic games where the stage games are
deterministically determined by the historical allocations. In these
games a sequence of allocations uniquely induces a sequence of
stage games. We investigate the core in three different approaches.

A coalition is said to be under-treated if the present value of its
stage-shares is smaller than the present stage-worth of it. A
sequence of stage-allocations is in the \emph{fair core} if no
sub-coalition is under-treated. Under the fair core approach,
which is similar to that taken by the relevant literature, an
under-treated coalition may complain but it cannot change the
evolution of the game by abandoning the previous environment and
creating a whole new game.

In the \emph{stable core} on the other hand, the share of a
coalition is compared to the opportunities it would have if it decided to
deviate. A coalition is said to be dissatisfied with a sequence of
allocations if, by quitting the original game, the coalition can
form another dynamic game, with a smaller number of players, and
afford better future allocations. A sequence of allocations is in
the stable core if no coalition can deviate and get on its own a
greater share than the one proposed by the sequence.

The stable core does not consider what the threat of a potential
deviating coalition consists of. It might be that the sequence of
allocations a coalition shows in order to substantiate its
dissatisfaction, is itself prone to deviations. Thus, a threat of a
coalition to deviate and obtain a certain sequence of allocations
may be non-sustainable and therefore non-credible. The
\emph{credible core} requires that any better sequence it might
generate on its own be credible. That is, any sequence of future
allocations must itself be immune to deviations of smaller
sub-coalitions that are also immune to deviations of smaller
coalitions.

In all our analysis at every stage $t$, either no player deviates
and therefore a game involving all the players of the previous stage
$t-1$ is played, or a coalition deviates and creates its own game,
which is a subgame of the previous one.  Players are never allowed
to establish a larger coalition once they have deviated and formed a
smaller one. So new games can be created by splitting, but not by
aggregation.  We make this assumption since, without it the possible
dynamics would be so general as not to produce any interesting
result. Moreover the assumption allows to describe a huge spectrum
of situations of relevance.

When a coalition $S$ deviates from the grand coalition $N$, we do
not take into account what happens to the coalition $N \setminus S$.
This is due to the fact that we are concerned with stability and
therefore with conditions that guarantee that no deviation will
actually materialize, no matter what the status of the abandoned
coalition is.

\subsection{Existing literature}\label{sec literature}

Dynamic cooperative games have been
studied in a few versions. Most of the studies, as we do,
concentrate on the core. \citet{Ovi:EJOR2000} studies the core of a
finitely repeated discounted cooperative game where the stage game
does not vary over time and no dynamic consideration is involved.

\citet{KraPerPet:IGTR2005} consider a finite horizon of
predetermined games. They study three different core concepts. The
\emph{classical core} assumes that coalitions planning to split off
do so right at the beginning. This concept does not depend on the
temporal structure of the game: the classical core coincides with
the core of an induced static game. The \emph{strong sequential
core}, on the other hand, allows for deviations of coalitions at any
stage of the game, but once a coalition deviates at some point, it
must keep doing so from that time on. In the above two concepts
deviations are not required to be credible, i.e., they could be
blocked by some sub-coalition in the future. The \emph{weak
sequential core} is robust against credible coalitional deviations.
The latter means that deviations are immune to
deviations of sub-coalitions. The sub-coalition deviations can be
themselves non-credible. \citet{HabHer:mimeo2010} provide a correction of the above definition of weak sequential core.

\citet{Pre:GEB2007} deals with infinite-horizon stationary
cooperative games, where at each moment the game is in one of a
finite number of states, that determines which instantaneous game is
played at that moment. The states evolves according to an exogenous
Markov chain and it does not depend on past allocations.  The author
considers the classical core and a version of the strong sequential
core, and provides conditions for nonemptyness of the strong
sequential core.
\citet{Hel:MPRA2008} focuses on the bargaining set of dynamic
cooperative games, where the sequence of stage TU-games is
exogenously specified.

Related results can be found in \citet{Gal:JET1978}, where a concept
of \emph{sequential core} is defined and is used to model lack of
trust in a two-period economy. In this model coalitions are allowed
to deviate in the second period. \citet{BecCha:E1995} consider
infinite horizon capital allocation models and define
\emph{recursive core} allocations, the ones where no coalition can
improve upon its consumption stream at any time given its
accumulation of assets up to that period.

\citet{Kou:ET1998}  introduces the notion of \emph{two-stage core},
that takes into account the possibility of temporary cooperation.
Within each coalition agents make future trades only if they are
enforceable, i.e., a coalition may have a limited horizon. Moreover
a coalition blocks at some point in time only if it can secure
improvements for its members in any possible consequence of a
deviation. \citet{PreHerPet:JME2002, PreHerPer:IJGT2006} use the
concepts of strong and weak sequential core in the context of
two-period economies. \citet{PreHerPet:ET2004} apply the concept of
strong sequential core to a stationary exchange economy.

\citet{Pet:VLU1977, Pet:World1993} deals with a cooperative game
induced by a (non-cooperative) differential game and
\citet{PetZac:JEDC2003} study the problem of allocation over time of
total cost incurred by countries in a cooperative game of pollution
reduction and compute the Shapley value of this game. These papers
are not about dynamic cooperative games but about a cooperative game
induced by non-cooperative game played over time.

There exists a whole literature on coalition formation, where
stability of coalitions is considered under different aspects
\citep[see, e.g.,][and references therein]{Ray:IJGT1989,
Chw:JET1994, Xue:ET1998, RayVoh:GEB1999, KonRay:JET2003,
DiaXue:JET2007, Ray:Oxford2007}. Typically this literature considers
strategically richer models than the one examined in this paper, so
it is closer in spirit to noncooperative game theory. For instance
the model considered by \citet{KonRay:JET2003}, which describes
coalition formation as a truly dynamical process, considers a state
space, beliefs, a probabilistic structure, and equilibrium concepts.

Our notion of credible core can be related to the papers by
\citet{BerPelWhi:JET1987, BerWhi:JET1987} on coalition-proof Nash
equilibria.

In some of our results we resort to the concept of
$\varepsilon$-core. This was introduced in \citet{ShaShu:E1966} to
analyze situations where the core is empty. It has been employed in
different  contexts by \citet{Woo:JME1983,  ShuWoo:MSS1983a,
ShuWoo:MSS1983b, WooZam:E1984, KovWoo:GEB2001, KovWoo:MOR2001,
KovWoo:JET2003, KovWoo:ET2005}, among others. In some of these
papers a parametrized collection of cooperative games is considered
and an approximate core is computed, where the goodness of the
approximation depends on the parameters of the game. In particular,
given the parameters $\pi$ describing a collection of games and
given a lower bound $n_{0}$ on the number of players in each game in
the collection, \citet{KovWoo:MOR2001} obtain a bound
$\varepsilon(\pi, n_{0})$ so that, for any $\varepsilon \ge
\varepsilon(\pi, n_{0})$, all games in the collection with at least
$n_{0}$ players have nonempty $\varepsilon$-cores. Some of our
results have a similar flavor, except that for us the lower bound on
the $\varepsilon$ is zero and the quantity  that guarantees the
existence of the $\varepsilon$-core is the discount factor, rather
than the number of players.

The paper is organized as follows. In Section \ref{sec motivating}
we give a motivating example based on the classical market games of
Shapley and Shubik.  The model is
introduced in \ref{sec dynamic-games} and the two first types of
core solutions are given in Section \ref{sec core}. Section \ref{sec
Markov-core} provides characterizations of the non-emptyness of the
$\varepsilon$-core when the discount factor is sufficiently large. The
credible core is discussed in Section \ref{sec def-core-credible}
and the paper ends with a section devoted to a few final
remarks.

\section{A motivating example: A market with externalities} \label{sec motivating}
To show a typical application of dynamic cooperative games consider
$n$ firms that engage repeatedly in a market game. At any period
each firm brings into the market its own endowment and technology,
that might depend on the firm's previous allocation. The firms then
share their endowments in order to produce the maximal  possible
quantity. An important feature of the model is the existence of a
positive externality reflected in the knowhow of each firm. The
production function of each firm increases as the number of firms in
the economy increases.

Formally, let $N=\{1,2, \dots, n\}$ be the set of firms that are
capable of producing a certain commodity using $\ell$ production
factors. In the static version of the model, when the input of
production factors is $y=(y_1,\dots, y_\ell)\in
\mathbb{R}^{\ell}_+$, firm $i$ produces $e(n)u^i(y)$, where $u^{i}$
is a concave function and $e(n)$ is the externality factor which is
increasing with the number  $n$ of firms in the economy.

The relevance of the externality factor becomes clear in the dynamic
model. An under-treated coalition of firms might want to split off
and form its own consortium. By doing so, on one hand, as an
independent consortium, it will be subject to a smaller externality
factor, since the number of cooperating firms is reduced. On the
other hand, it will have the full freedom to share the entire profit
the way it wishes.

To make the model more realistic, we assume that the production
functions change over time and that, in order to keep the production
ability, firms need to invest every period in maintenance, which
requires resources. These resources come from the allocation of the
firms in previous times, and whatever does not go into maintenance,
is used for dividends. Thus, the current production function depends
on yesterday's allocation and the externality factor.

For instance, suppose that the production function of firm $i$ at
time $t$ is
\begin{equation}\label{eq prod-func}
 u^i_t(y)= e(k)\gamma^{1/(1+x^i_{t-1})} u^i_{t-1}(y),
\end{equation}
 where
$k$ is the number of firms in the consortium that $i$ belongs to,
$x^i_{t-1}$ is the allocation of firm $i$ at time $t-1$, and
$0<\gamma<1$ is the decay rate per period. Note that
$\gamma^{1/(1+x^i_{t-1})}$ is increasing with $x^i_{t-1}$ and
therefore the greater the allocation at time $t-1$, the more
efficient the firm at time $t$.

For the sake of simplicity assume that each firm is endowed anew at
any time with the same production factor basket, say $y^i$. We are
ready now to describe the dynamic. If at time $t$  firm $i$ belongs
to the consortium $S$, then it engages the stage market game
$v^S_t$ defined by
\[
v^S_t(T)=\max\Big\{\sum_{i\in T}u^i_t(z^i); ~ \sum_{i\in T}
z^i=\sum_{i\in T}y^i, ~z^i\in \mathbb{R}^{\ell}_+\Big\}
\]
for every $T\subseteq S$, with $k=|S|$ in \eqref{eq prod-func}.

\section{Dynamic cooperative games}
\label{sec dynamic-games}
\subsection{Dynamic of the game}\label{subsec dynamic}
Let $N=\{1,\dots ,n\}$ be the set of players. For any coalition
$S\subseteq N$ consider a
function $v^S: 2^S \to \mathbb{R}_+$ with $v^S(\varnothing)=0$. The
function $v^S$ is called  \emph{characteristic function defined over
$S$} with the set $S$ being the \emph{grand coalition} of $v^S$. An
\emph{allocation} of $v^S$ is a vector $x^S\in \mathbb{R}^S$ that
satisfies $\sum_{i\in S}x^S(i)=v^S(S)$ and $x^S(i)\ge B$, where $B$
is a uniform lower bound over all allocations. The reason for this
lower bound is primarily technical: with this lower bound the set of
allocations becomes compact. If we take $B=0$,  no inter-temporal
loans are allowed, whereas, when $B<0$ a player can get less than
her individually rational level at a certain stage, but then she
will be compensated in the future.

At any stage $t$ a cooperative games over a grand coalition $S$ is
played. Both the game and the grand coalition depend on the history
up to that stage. The players of the grand coalition $S$ are getting
at time $t$ an allocation of the game  actually being played. The
cooperative game of the subsequent period depends on the current
allocation.

Formally, a dynamic cooperative game is played over a discrete set
of periods. The evolution of the system depends on the initial game,
the allocation at every period and the dynamic $V$, specified below.
At stage $1$ any coalition decides whether to split off or not. If
no coalition splits off, then the initial cooperative game $v_1^N$
is played, and the allocation is $x^N_1$. If $S$ splits off,  then
the initial cooperative game is $v_1^S$ and the allocation is
$x^S_1$. Note that the initial game is predetermined and is beyond
the control of the players, unless a sub-coalition wishes to split
off.

Like in a static cooperative game we will use (different versions
of) the core to determine whether the grand coalition $N$ is stable
(in different senses), namely no sub-coalition $S$ has an incentive
to split. In case a coalition $S$ can advantageously split, the
grand coalition $N$ is not stable, which makes the core empty. That
is why we do not need to specify what payoffs players in $N\setminus
S$ get or whether some of them want to form a coalition of their
own.

From period 2 on, the evolution is governed by $V$. The state of
the system is a pair $(S; x^S)$, where $S$ is a coalition and $x^S$
is an allocation of $S$. As long as no coalition splits off, the
state is of the form $(N; x^N)$; once a coalition $S\subseteqq
N$ deviates, the system turns to a state of the type $(S; x^S)$, and
$S$ remains fixed from that stage on forever.

When the state at time $t$ is $(N; x^N)$, unless a deviation of $S$
occurs, the game played at time $t+1$ is $V(N; x^N)$, whose grand
coalition is $N$. At this time the allocation is $x^N_{t+1}$.
However, if $S$ deviates, the game played at time $t+1$ is $V(S;
x^N_S)$, where $x^N_S$ is the allocation induced by  $x^N$ to
coalition $S$. At time $t+1$ the allocation is $x^S_{t+1}$ and the
subsequent game is  $V(S;x^S_{t+1})$ whose grand coalition is $S$.

For the sake of simplicity we assume a Markovian structure of the
game, where the stage game played at time $t$ depends on the
allocation at time $t-1$. More complicated dynamics could be
considered, for instance the game at time $t$ could depend on the
whole past history. An interesting intermediate case is the one
where the game depends on some unidimensional function of the
history, for instance on the sum of the past allocations. Think for
instance of a model of dynamic public good provision, where every
player contributes to a public good, whose level at time $t$ depends
on the (discounted) sum of past contributions.

A few remarks regarding our modeling choices are in place.  Up to 
Section \ref{sec def-core-credible}, we assume that once a coalition
deviates, it remains the grand coalition forever and no further
splitting off of sub-coalitions will take place. This restriction
corresponds to the first two types of core solutions, that are
concerned with long-term plans that prevent these kind of
deviations. When dealing with the third type of core solution 
we lift this restriction. A deviating coalition is not protected
against coups of its sub-coalitions. This is the reason why a threat
of a coalition to deviate is rendered credible only if it is
immunized against further split offs of its sub-coalitions.

We also assume here that once a coalition deviates, there will be no
way to restore a full cooperation and to rebuild the grand coalition
$N$. Such a possibility requires a much more complicated dynamic
that would depend also on past allocations of non-deviating members.
In this paper we decide to keep matters as simple as possible. This
distinguishes our model from the literature on coalition formation
that we mentioned in the Introduction.

In our model a coalition $S$ can deviate prior to time $1$ and play
the game $v_1^S$. We could as well assume that at time $1$ for any
$S$, $V(N, x^{N}_{0})(S)=V(S, x^{N}_{0})(S)$, and it will make no
difference in the results. Note however, that in the motivating
example, when the grand coalition is $N$, the worth of $S$ is
typically different from its worth when the grand coalition is $S$.

Our model refers to the deviating coalition, but it ignores the rest 
of the players. In principle, the complement of a deviating
coalition could be treated just like the deviating coalition itself.
The continuation game of the complement depends on its historical
allocations. However, we chose not to specify it for two reasons.
First, our study focuses on two main aspects, fairness and
stability. Whether or not a coalition is treated equitably does not depend 
on what happens to its complement. The same applies to the
willingness of a coalition to split off: it is not affected by the
its complement.

The second reason is that we study core solutions and characterize 
the games whose core is non-empty. In these games no deviation will
occur and in any case no coalition will be left alone without its
complement.

\subsection{Discounting future payoffs}\label{subsec discount}

We assume throughout that all players have the same discount factor
$0<\delta<1$. Suppose that player $i$'s payoff at time $t$ is
$x_t(i)$. Her present normalized payoff is
\[
x_*(i,\delta)=(1-\delta)\sum_{t=1}^\infty \delta^{t-1} x_t(i).
\]
We define for every $T\subseteq N$
\[
{x}_*(T,\delta)=\sum_{i\in T}{x}_*(i,\delta).
\]
Therefore ${x}_*(T,\delta)$ is the sum of all individual allocations
of  $T$'s members.

Define a new characteristic function on $N$ as follows.
\[
v_{*}(S,\delta)= \max {x}_*(S,\delta),
\]
where the maximum is taken over
all feasible histories of $S$-allocations: $x^S_1,x^S_2,\dots $ and
$x_t(i)=x^S_t(i)$.

The dynamic game with discount factor $\delta$ will be denoted by
$(V,\delta)$.

\section{The fair and stable core solutions}\label{sec core}

\subsection{The fair core}\label{subsec def-core-fairness}

There are two justifications for the definition of core in the
classical model of one-shot cooperative game. The first is fairness
and the second is stability. An allocation is in the core if any
coalition obtains at least its worth. Behind this justification lies
an assumptions that a central planner has a full control on what the
players get, and once she makes up her mind regarding the split of
the cake, the players have no way to protest.

This kind of reasoning leads us to define what we call the `fair
core' first. A sequence of allocations $x^N_1,x^N_2,\dots  $ is in
the fair core if fore every coalition $S$ the present value of the shares of $S$
exceeds the present value of the  worths of
$S$. It is assumed that coalition $S$ can  do nothing about its
future shares, which gives the central planner the freedom to choose
allocations without paying attention to semi-strategic considerations
like stability.

The definition of fair core is concerned solely with the
following consideration: it is fair to give coalition $S$
allocations whose present value is no less than what their present
worth is.

\begin{definition}[Fair core]
\begin{enumerate}[{\rm (i)}]
\item A sequence of $N$-allocations  $x^N_1,x^N_2,\dots  $ is in the
\emph{fair core} of $(V,\delta)$ if for every $S\subseteq N$
\[
{x}^N_*(S,\delta)\geq (1-\delta)\sum_{t=1}^\infty \delta^{t-1}
V(N;x^N_{t-1})(S).
\]

\item A sequence
of $N$-allocations $x^N_1,x^N_2,\dots  $ is in the\emph{
$\varepsilon$-fair core} of $(V,\delta)$ if for every $S\subseteq N$
\[
{x}^N_*(S,\delta)\geq (1-\delta)\sum_{t=1}^\infty \delta^{t-1}
V(N;x^N_{t-1})(S)-\varepsilon.
\]
\end{enumerate}
\end{definition}

If the core of the stage game at time $t$ is nonempty, for
every $t$, then the fair core of $(V,\delta)$ is nonempty.

\begin{example}
It is possible for a game $(V,\delta)$ to have a nonempty fair core
even if for every $t$ the core at the stage game is empty. Consider
the four player games $u_1$ and $u_2$, where $u_1(i)=0$ for every
player $i$, $u_1(12)=u_1(23)=u_1(13)=3$, $u_1(123)=4$ and player 4
is dummy; game $u_2$ is like the $u_1$ where players 1 and 4
exchange their roles. The cores of $u_1$ and $u_2$ are empty.

Suppose that $v_{1}^{N}=u_1 $,
$V(N;(1,\frac{3}{2},\frac{3}{2},0))=u_2$ and
$V(N;(0,\frac{3}{2},\frac{3}{2},1))=u_1$. Suppose that
$x^N_t=(1,\frac{3}{2},\frac{3}{2},0)$ when $t$ is odd and
$x^N_t=(0,\frac{3}{2},\frac{3}{2},1)$ when $t$ is even. The result
is that $u_1$ is played in odd times and $u_2$ in even times.

When the discount factor $\delta$ is high, the average over time of
stage games is close to $(u_1+u_2) /2$ and the discounted value of
the stream of allocations is close to
$(\frac{1}{2},\frac{3}{2},\frac{3}{2},\frac{1}{2})$. Thus the
sequence of payoffs $(x^N_t)_t$ is in the $\varepsilon$-fair core of
$(V,\delta)$.
\end{example}

\begin{definition}[Efficiency]
A sequence of $N$-allocations  $x^N_1,x^N_2,\dots  $ is
\emph{efficient} in $(V,\delta)$ if
\[
{x}^N_*(N,\delta)\geq v_{*}(N,\delta).
\]
\end{definition}
That is,  $x^N_1,x^N_2,\dots  $ is efficient if the present value of
the grand coalition's share is the maximum available. Recall that
the actual game is history dependent. While any of the stage
allocations could be locally efficient, it might reduce the size of
the cake in subsequent periods and thereby might hamper efficiency.

In the classical one-shot game the definition of allocation
contains the requirement of efficiency. This is not the case in the
dynamic game. The fair core is not necessarily efficient, as
demonstrated by the following example.

\begin{example}
Suppose that
\[
v_{1}(N;S)=
\begin{cases}
1 & S=N, \\
0 & \text{otherwise.}
\end{cases}
\]
and, if the allocation $x^N$ is uniform (i.e., treats all players
equally), then $ V(N;x^N)= |x^N|v_{1}^{N}$, where $|x^N|$ stands for
the sum of the individual allocations of all players, otherwise $
V(N;x^N)= 0$. Call $e_{1} = (1,0,\dots, 0) \in \mathbb{R}^{N}$ the
first vector of the standard basis and $\overrightarrow{0} =
(0,\dots, 0) \in \mathbb{R}^{N}$ the zero vector.   The sequence of
allocations $e_{1}, \overrightarrow{0}, \overrightarrow{0},\dots $
is in the fair core: the first allocation $e_{1}$ is in the core of
the stage game $v_{1}^{N}$, but it is not uniform and therefore all
subsequent games are identically $0$. This sequence is in the fair
core of  $ (V,\delta)$, but it is certainly not efficient.
\end{example}

\begin{definition}[Efficient fair core]
\begin{enumerate}[{\rm (i)}]
\item A sequence of $N$-allocations  $x^N_1,x^N_2,\dots  $ is in the
\emph{efficient fair core} of $(V,\delta)$ if it is efficient
and in the fair core of $(V,\delta)$.

\item A sequence
of $N$-allocations $x^N_1,x^N_2,\dots  $ is in the\emph{
$\varepsilon$-efficient fair core} of $(V,\delta)$ if it is efficient
and in the $\varepsilon$-fair core of $(V,\delta)$.
\end{enumerate}
\end{definition}

In the special case where the worth of the grand coalition is
constant, that is when  the worth of the grand coalition in all
stage games do not depend on the history of allocation nor on the
time, the fair core and the efficient fair core coincide.

\subsection{The stable core}\label{subsec def-core-stability}
The definition of the fair core does not make use of the entire
structure of the dynamic game. It uses only states of the type $(N;
x^N)$ and not of the type $(S; x^S)$, where $S\subsetneqq N $.

The second justification of the core involves stability
considerations. An under-treated coalition might deviate, create its
own game, and improve its position by reallocating its endowment.
When a coalition $S$ threatens to deviate, it shifts the system to a
state of the form $(S; x^S)$ and the dynamic then is governed by
$V(S; \cdot)$. This is why referring just to states of the type $(N;
x^N)$ is insufficient and there is a need to refer to states of the
form $(S; x^S)$ for every coalition $S$.

Let $x^N_1, x^N_2, \dots$ be a sequence of allocations. Define
\[
x^h_*(i,\delta)=(1-\delta)\sum_{t=h}^\infty \delta^{t-1} x_t(i).
\]
The number $x^h_*(i,\delta)$ represents the discounted value at time $h$ of
the shares that player $i$ receives at time $h$ and on. In
particular, $x_*(i,\delta)=x^1_*(i,\delta)$. Similarly, define the
game $v^h_{*}(\cdot,\delta)$ as
\[
v^h_{*}(S,\delta)=\max \sum_{i \in S}
x^h_*(i,\delta),
\]
where the maximum is taken over all feasible sequences $x^N_1,
x^N_2, \dots x^N_{h}, z^S_{h+1},z^S_{h+2}\dots$ that coincide with
$x^N_1, x^N_2, \dots$ up to $h$, the time when $S$ decides to leave.

A sequence of allocations $x^N_1,x^N_2,\dots $ is in the stable
core if  at any time $h$ the value of the shares of coalition $S$
exceeds the value of what coalition $S$ could guarantee in autarky,
meaning without being engaged with others.

\begin{definition}[Stable core]
\begin{enumerate}[{\rm (i)}]
\item
A sequence of $N$-allocations  $x_1,x_2,\dots  $ is in the
\emph{stable core} of $(V,\delta)$ if for every $S\subseteq N$
and every $h$,
\[
{x}^h_*(S,\delta)\geq v^h_{*}(S,\delta).
\]
\item
 A sequence of $N$-allocations
$x^N_1,x^N_2,\dots  $ is in the \emph{$\varepsilon$-stable core} of
$(V,\delta)$ if for every $S\subseteq N$ and every $h$,
\[
{x}^h_*(S,\delta)\geq v^h_{*}(S,\delta)-\varepsilon.
\]
\end{enumerate}
\end{definition}

\begin{remark}
\begin{enumerate}[{\rm (a)}]
\item Unlike the fair core, any
history of $N$-allocations in the stable core is efficient.

\item The two notions of core are not necessarily co-variant
with linear transformations. If $V$ is co-variant with linear
transformations, so are the two cores.
\end{enumerate}
\end{remark}

\begin{example}
It is possible for a game $(V,\delta)$ to have a nonempty stable
core even if for every $t$ the core of the stage game is empty.
Consider a set $N = \{1, 2, \dots, 2k+1\}$, and let
\[
V(N;x^N)(S) =
\begin{cases}
x^N(N) & \text{ if } |S| \ge k+1,\\
0 & \text{ otherwise.}
\end{cases}
\]
be $x^N(N)$ times a majority game, and let $ V(S;x^S)(T) =2^{-|T|}
V(N;\cdot)(T) $ for every $S\subsetneqq N$.

The core of each stage game $V(N; x^N_t)$ is empty, and so is the
fair core of $(V,\delta)$, whereas its stable core is not,
regardless of the discount factor. The reason being that when a
coalition deviates, its future payoff declines rapidly. Coalitions
will be satisfied with shares that are strictly smaller than their
worth, because deviation does not promise a greater portion.
\end{example}

\section{Non-emptyness of the core}\label{sec Markov-core}

\subsection{Non-emptyness of the fair core}\label{subsec
fairness-Markov-core} The following result applies to  games where
$v(N;x^N)(N)$ is equal to $x^N(N)$ for every allocation $x^N$.

For the next definition we recall that the set of characteristic
functions is a vector space. Denote by $\Delta(d)$ the set of stage
allocations of games where the worth of the grand coalition is $d$.
That is, $\Delta(d)=\{x^N; x^N(N)=d, x^N(i)\ge B$ for every $i\in N
\}$. The set $\Delta(d)$ is obviously compact, which is the reason why we
impose the constraint that  $x^N(i)\ge B$ for every $i\in N$.

Let $x\in \Delta(d)$ be an allocation. We say that
$(y_1,\dots,y_k;\alpha_1,\dots,\alpha_k)$ is a \emph{split} of $x$,
if
\begin{equation}\label{eq1}
  x=\sum_{j=1}^k \alpha_j y_j,
\end{equation}
where $\alpha_j\ge 0$,  $\sum_{j=1}^k \alpha_j = 1$, and $y_j$ is an
allocation $j=1,\dots, k$. That is, $x$ is a convex combination of
the allocations $y_j\in \Delta(d)$ with $\alpha_j$ being the
respective weights.

\begin{definition}  The \emph{convexification}
of $V(N; \cdot)$, denoted $\conv V(N; \cdot)$,
is a correspondence defined as follows. Let $x\in \Delta(d)$ be an
allocation. Then, $\conv V(N;x)$ is the set of all games that can be
expressed as $\sum_{j=1}^k \alpha_j V(N,y_j)$, where
$(y_1,\dots,y_k;\alpha_1,\dots,\alpha_k)$ is a {split} of $x$.
\end{definition}

If $V(N;\cdot )$ is continuous, then  $\conv V(N;x)$ is closed and
therefore, $\conv V(N;x)$ also contains all games of the form
$\sum_{j=1}^\infty \alpha_j V(N;y_j)$, where $x$ is expressed as an
infinite convex combination of allocations: $
  x=\sum_{j=1}^\infty \alpha_j y_j.
$

\begin{theorem}\label{th:Markov}
Consider a game where $V(N,\cdot)(N)=d$ and $V(N;\cdot)$ is
continuous. Assume that for every coalition $S$, $V(N;\cdot)(S)$ is
bounded. For every $\varepsilon>0 $ there is $0 < \delta_0 < 1$ such that
for every $\delta\in (\delta_0,1)$ the $\varepsilon$-fair core of
$(V,\delta)$ is not empty if and only if there exists $x\in
\Delta(d)$ and $v\in \conv V(N;x)$ such that $x$ is in the core of
$v.$
\end{theorem}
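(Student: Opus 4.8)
The plan is to read this as a folk-theorem statement: as $\delta\to 1$ the discounted averages of the stage allocations and of the stage worths both converge to convex combinations, so nonemptiness of the $\varepsilon$-fair core for patient players should be controlled by a static condition on $\conv V(N;\cdot)$. The tool I would lean on throughout is finite dimensionality. Characteristic functions on $N$ live in $\mathbb{R}^{2^n-1}$ and $\Delta(d)\subseteq\mathbb{R}^n$ is compact, so by continuity of $V(N;\cdot)$ the graph $G=\{(z,V(N;z)):z\in\Delta(d)\}$ is compact, and by Carath\'eodory's theorem $\conv G$ is compact as well. Since $v\in\conv V(N;x)$ means exactly $(x,v)\in\conv G$ (finite convex combinations suffice), the set $\conv G$ encodes the convexification and the matching-barycenter constraint $x=\sum_j\alpha_j y_j$ simultaneously, and its compactness is what will let me pass the required inequalities to the limit.

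For the ``if'' direction, write $v=\sum_{j=1}^k\alpha_j V(N,y_j)$ and $x=\sum_{j=1}^k\alpha_j y_j$ with $y_j\in\Delta(d)$ and $x(S)\ge v(S)$ for all $S$. First I would replace the $\alpha_j$ by rationals $n_j/p$ close enough that $\sum_j(n_j/p)y_j(S)$ and $\sum_j(n_j/p)V(N,y_j)(S)$ stay within $\varepsilon/2$ of $x(S)$ and $v(S)$ for every $S$; this is possible because there are finitely many $S$ and all quantities are bounded. Then I play the period-$p$ cyclic sequence that uses $y_j$ exactly $n_j$ times. This is feasible since $V(N;\cdot)(N)=d$ and each $y_j(N)=d$, so every $y_j$ is a legitimate allocation of whichever stage game arises. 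For a periodic sequence the discounted average converges, as $\delta\to1$, to the arithmetic mean over one period, and this holds both for the allocations $x^N_t(S)$ and for the worths $V(N;x_{t-1}^N)(S)$; the one-step index shift and the predetermined initial game $v_1^N$ touch only finitely many terms and are annihilated by the factor $(1-\delta)$. Choosing $\delta_0$ to push the residual discounting error below the remaining $\varepsilon/2$ then yields the $\varepsilon$-fair core inequality for every $S$.

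For the converse I would run the same correspondence backwards. Given any sequence $(x_t^N)_t$ in the $\varepsilon$-fair core, form the discounted-average pair
\[
(\bar x_\delta,\bar v_\delta)=(1-\delta)\sum_{t=1}^\infty\delta^{t-1}\bigl(x_{t-1}^N,\,V(N;x_{t-1}^N)\bigr),
\]
which is a limit of convex combinations of points of $G$ and hence lies in the compact set $\conv G$; thus $\bar v_\delta\in\conv V(N;\bar x_\delta)$. The defining inequality of the $\varepsilon$-fair core, after absorbing the $O(1-\delta)$ difference between the shifted average $\bar x_\delta(S)$ and the unshifted $x_*^N(S,\delta)$ (bounded because the allocations and $V(N;\cdot)(S)$ are bounded), becomes $\bar x_\delta(S)\ge\bar v_\delta(S)-\varepsilon-O(1-\delta)$ for all $S$. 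Extracting a subsequential limit as $\delta\to1$, using compactness of $\conv G$ together with continuity of $z\mapsto z(S)$ and $z\mapsto V(N;z)(S)$, produces a pair $(x^\ast,v^\ast)\in\conv G$, i.e.\ $v^\ast\in\conv V(N;x^\ast)$, with $x^\ast(S)\ge v^\ast(S)-\varepsilon$ for every $S$.

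The step I expect to be the main obstacle is passing from this approximate statement to the exact core membership in the conclusion: for a fixed $\varepsilon$ the limit only gives $x^\ast(S)\ge v^\ast(S)-\varepsilon$. Since $x^\ast(N)=v^\ast(N)=d$ holds automatically, the only thing separating $x^\ast$ from the core of $v^\ast$ is this $\varepsilon$ of slack, and the natural way to remove it is to apply the construction along a sequence $\varepsilon\to0$: each $\varepsilon$ yields a pair in the compact set $\conv G$ with core deficit at most $\varepsilon$, and a final compactness limit produces $(x^\ast,v^\ast)\in\conv G$ with $x^\ast(S)\ge v^\ast(S)$ for all $S$. Making this double limit ($\delta\to1$, then $\varepsilon\to0$) precise, and verifying that the index shift and the fixed initial game stay negligible uniformly throughout, is the part that requires the most care.
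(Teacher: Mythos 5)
Your proposal is correct and follows essentially the same route as the paper's proof: the necessity direction treats the discounted averages of the allocations and of the stage games as a single point of the (compact) convex hull of the graph of $V(N;\cdot)$ over $\Delta(d)$, satisfying the core inequalities up to $\varepsilon+O(1-\delta)$, and then removes the slack by a double compactness limit, while the sufficiency direction realizes the split $(y_1,\dots,y_k;\alpha_1,\dots,\alpha_k)$ by a time-allocation whose discounted weights approximate the $\alpha_j$. The only differences are cosmetic: the paper uses the Fudenberg--Maskin partition of periods into sets $T^j$ with exact discounted weights $\alpha_j$ where you use rational approximation plus periodic play, and your explicit appeal to Carath\'eodory and to $\conv G$ cleanly formalizes what the paper compresses into ``since $\varepsilon$ is arbitrary, from compactness and continuity of $V(N;\cdot)$ we conclude.''
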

Before we get to the proof we wish to comment on the contents of this 
theorem. Just like in the folk theorem of the non-cooperative game
theory, it characterizes the solution of the dynamic game in static
terms. Specifically, it characterizes when the $\varepsilon$-fair
core of the dynamic game is not empty in terms of the
{convexification} of $V(N; \cdot)$.

Note that we refer to the $\varepsilon$-fair core rather than to the 
fair core. The question is whether we do it because we just cannot
prove anything stronger, or that it is due to a structural
insurmountable difficulty. Recall that the dynamic game is described
by the dynamics, $V$ and by an initial game played at the first
stage. While there is some control of future games through past
allocations, there is no way to alter the initial game. It might
happen that this stage-game hinders the existence of an exact core
while  $\varepsilon$-fair core does exist. 

\begin{proof}[Proof of Theorem~\ref{th:Markov}]
Suppose first that for every $\varepsilon>0 $ there is $1>\delta_0>0 $ such
that for every $\delta \in (\delta_0,1)$ there is a sequence
$x_1,x_2, .. .$ of allocations in the $\varepsilon$-fair core of
$(V,\delta)$.

Denoting,
\[x=(1-\delta)\sum_{t=1}^\infty \delta^{t-1} x_t \quad\text{and}\quad
u=(1-\delta)\sum_{t=1}^\infty \delta^{t-1} V(N;x_{t-1}),
\]
we have for every $S\subseteq N$,
\[
x(S)\geq (1-\delta)\sum_{t=1}^\infty \delta^{t-1}
V(N;x_{t-1})(S)-\varepsilon=u(S)-\varepsilon.
\]
Here, $V(N;x_{0})$ denotes the initial game, $v_1$.

Note that $(1-\delta)\sum_{t=2}^\infty \delta^{t-1} V(N;x_{t-1})=
u(S)-(1-\delta) v_1$, and we obtain
\begin{align*} x(S) &\geq
u(S)-(1-\delta)v_1(S)+(1-\delta)v_1(S)-\varepsilon \\
&= (1-\delta)\sum_{t=2}^\infty \delta^{t-1} V(N;x_{t-1}) +
(1-\delta)v_1(S)- \varepsilon\\ & \ge (1-\delta)\sum_{t=2}^\infty
\delta^{t-1} V(N;x_{t-1}) - \varepsilon.
\end{align*}

When $\delta$ is sufficiently close to $1$, since $V(N;\cdot)(S)$ is
bounded, we have
\[
\delta\frac{1-\delta}{\delta}\sum_{t=2}^\infty
\delta^{t-1} V(N;x_{t-1})(S)>
\frac{1-\delta}{\delta}\sum_{t=2}^\infty \delta^{t-1}
V(N;x_{t-1})(S)- \varepsilon.
\]
Thus,
\begin{align*} x(S) &\geq (1-\delta)\sum_{t=2}^\infty \delta^{t-1}
V(N;x_{t-1})-\varepsilon \\
&=\delta\frac{1-\delta}{\delta}\sum_{t=2}^\infty
\delta^{t-1} V(N;x_{t-1})(S)-\varepsilon \\
&\geq  \frac{1-\delta}{\delta}\sum_{t=2}^\infty \delta^{t-1}
V(N;x_{t-1})(S)-2\varepsilon.
\end{align*}

Define
\[
x'=\frac{1-\delta}{\delta}\sum_{t=2}^\infty \delta^{t-1}
x_t.
\]
If $\delta$ is sufficiently large, for every $S$,
$|x(S)-x'(S)|<\varepsilon$,
and therefore,
\[
x'(S)\geq\frac{1-\delta}{\delta}\sum_{t=2}^\infty \delta^{t-1}
V(N;x_{t-1})(S)-3\varepsilon.
\]

In other words, $x'$ is in the $3\varepsilon$-core of the game
\[
\frac{1-\delta}{\delta}\sum_{t=2}^\infty \delta^{t-1} V(N;x_{t-1})
\]
which is a (infinite) convex combination of the games $V(N;x_{t-1}),
t=2,\dots $, each with the weight
\[
\delta^{t-1}
\frac{1-\delta}{\delta}=\delta^{t-2}(1-\delta)
\]
and is therefore in $\conv V(N;x')$ (because $x'$ is a convex
combination of $x_{t-1}$'s with the weights
$\delta^{t-2}(1-\delta)$, $t=2,3,\dots $).

Since $\varepsilon$ is arbitrary, from compactness and continuity of
$V(N;\cdot)$, we conclude that there exists $v\in \conv V(N;x)$ such
that $x$ is in the core $v$, as desired.

We now assume that there is a vector $x\in \Delta$ such that $x$ is
in the core of $v\in \conv V(N;x).$ By definition of $\conv V(N;x)$,
there is a split $(y_1,\dots,y_k;\alpha_1,\dots,\alpha_k)$ of $x$
such that
\begin{equation}\label{eq2}
\alpha_1V(N;y_1)+\dots +\alpha_kV(N;y_k)=v.
\end{equation}
Fix an $\varepsilon>0$. For $\delta$ large enough one can divide the set of
periods into $k$ disjoint sets $T^1,\dots ,T^k$ in a way that for
every $j=1,\dots ,k$,
\begin{equation}\label{eq:alphadelta}
\alpha^j=(1-\delta)\sum_{t\in T^j}\delta^{t-1}
\end{equation}
(see for instance, \citet{FudMas:E1986}).

At time 1 set $v_1$ as an arbitrary game where $v_1(N)=d$.
And in general, for $t\in T^j$ define $x_t=y_j$. In words, over the
time periods in the set $T^j$ the allocation is $y_j$ and the game
that follows is $V(N;y_j)$. Note that since $V(N;y_j)(N)$ is fixed
and equal to $d$, for every $i=1,\dots, k$,  $y_i$ is an allocation
of $V(N;y_j)$ (because $y_i \in \Delta(d)$). By construction and \eqref{eq1} the
present value of allocations is $x$.

On the other hand, the present value of all the stage games is
\[
(1-\delta)v_1+\sum_{j=1}^k(1-\delta)\sum_{t\in T^j, t\not =
1}\delta^{t-1}V(N;x_{t-1})=
(1-\delta)v_1+\sum_{j=1}^k(1-\delta)\left[\sum_{t\in T^j, t\not =
1}\delta^{t-1}\right]V(N;y_j).
\]

Let $j_0$ be such that $1\in T^{j_0}$. Using \eqref{eq:alphadelta}
we obtain for every coalition $S$,
\begin{align*}
&(1-\delta)v_1(S)+\sum_{j=1}^k(1-\delta)\sum_{t\in T^j, t\not =
1}\delta^{t-1}V(N;x_{t-1})(S) \\
&\qquad\qquad=(1-\delta)v_1(S)+ \sum_{j=1}^k\alpha_j V(N;y_j)(S)-
(1-\delta)V(N;y_{j_0})(S)\\
&\qquad\qquad\le(1-\delta)v_1(S)+ \sum_{j=1}^k\alpha_j V(N;y_j)(S).
\end{align*}
As $V(N; \cdot)$ is continuous,
$V(N;x)(S)$ is bounded and therefore, when $\delta$ is sufficiently
close to 1, $(1-\delta)v_1(S)<\varepsilon
$ for every coalition $S$. Thus, the present value for every
coalition $S$ satisfies,
\[
\Big((1-\delta)v_1+\sum_{j=1}^k(1-\delta)\sum_{t\in T^j, t\not =
1}\delta^{t-1}V(N;x_{t-1})\Big)(S)\le  v(S) +\varepsilon.
\]
Since $x$ is in the core of $v$, the sequence $x_1,x_2, \dots$ is in
the $\varepsilon$-fair core of the dynamic game $V^N$ with the
discount factor $\delta$. \qed
\end{proof}

\begin{example}

Let $N = \{1,2,3\}$ and let $e_i$ be the $i$-th vector of the
standard basis in $\mathbb{R}^{3}$, that is, its $i$-th coordinate
is $1$ and the others are $0$. Define  $ V(N,e_i)$ to be the
additive game where $v(j)=1/2$ for $j\not = i$ and $v(i)=0$, which
we denote as $p_{-i}$. When $x=(x^{1}, x^{2}, x^{3})$ is such that
$x^i\le 4/5$ for every $i\in N$, then  $ V(N,x)$ is the simple
majority game (i.e., $V(N, x^{N}_{0})(S) = 1$ iff $|S| \ge 2$).
Moreover $ V(N,\cdot)$ is extended to the whole simplex in a
continuous fashion, keeping $ V(N,\cdot)(N)=1$. Note that in all the
games involved, the feasible allocations are elements of the simplex
$ \{(x^{1}, x^{2}, x^{3}): x^{i} \ge 0, \sum_{i} x^{i}=1\}. $

Let $v_1$, the initial game,  be the simple majority game. Set,
$x_1=e_1$, $x_2=e_2$ and\footnote{Here, $3k(\hskip-2mm \mod 3)=3$.}
$x_t=e_{t(\hskip-2mm \mod 3)}$. The dynamic induces: $v_2=V(N,
x_1)=p_{-1}$,  $v_3=V(N, x_2)=p_{-2}$, and  $v_t=V(N,
x_2)=p_{-(t-1)(\hskip-2mm \mod 3)}$. It turns out that when $t>1$
the core of  $v_t$ is non-empty. However, if the allocation at time
$t$ is the unique core allocation of $v_t$, the next stage-game is
the majority game, whose core is empty.

It is easy to check that $x_1,x_2, \dots  $ is in the
$\varepsilon$-fair core for discount factors large enough. To
see that the dynamical game satisfies the sufficient condition of
Theorem \ref{th:Markov}, consider $x=(1/3,1/3, 1/3)$. Let $v$ be the
additive game with weights $1/3$ assigned to each player. Note that
\begin{enumerate}[(a)]
\item
$x$ is in the core of $v$;
\item
$\sum_{i=1}^3 \frac{1}{3}e_i$  is a split of $x$ and
\item
$v=    \sum_{i=1}^3 \frac{1}{3}p_{-i} = \sum_{i=1}^3 \frac{1}{3}V(N,e_i)$.
\end{enumerate}
Thus the sufficient condition of Theorem \ref{th:Markov} is satisfied.
\end{example}

Let $M$ be a closed set of allocations. For a vector $x$ define
$\conv_M  V(N;x)$ as $\conv V^{N}(x)$ was defined, with the extra
condition that the allocations $y_j$ are in $\conv M$. That is,
$\conv_M V(N;x)$ is the set all games that can be expressed as
$\sum_{j=1}^k \alpha_j V(N;y_j)$, where
$(y_1,\dots,y_k;\alpha_1,\dots,\alpha_k)$ is a {split} of $x$ and
$y_j \in \conv M$, $j=1,\dots, k$.

\begin{theorem}\label{th efficinet fairness} Consider a game where $V(N;\cdot)$ is continuous and bounded.
For $\gamma>0$ denote $M_{\gamma} = \{x;~V(N;x)(N)>\sup_y
V(N;y)(N)-\gamma\}$. For any $\varepsilon>0$ the efficient $\varepsilon$-fair
core of a game $V(N;\cdot)$ is not empty for $\delta$ large enough
if and only if for every $\gamma$ sufficiently small there exists
$x$ and $v\in \conv_{M_{\gamma}} V(N;x)$ such that $x$ is in the
$\gamma$-core of $v$.
\end{theorem}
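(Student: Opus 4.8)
The plan is to reproduce the two-part structure of the proof of Theorem~\ref{th:Markov}, with the efficiency requirement encoded through the near-maximal set $M_\gamma$. Write $D=\sup_y V(N;y)(N)$. Since the allocation played at time $t$ is an allocation of the game induced at time $t-1$, one has $x_t(N)=V(N;x_{t-1})(N)$, hence
\[
x^N_*(N,\delta)=(1-\delta)\sum_{t=1}^\infty\delta^{t-1}V(N;x_{t-1})(N),
\]
so efficiency is exactly a statement about the discounted average of the stage worths $V(N;x_{t-1})(N)$, and a stage contributes within $\gamma$ of the maximum $D$ precisely when its inducing allocation lies in $M_\gamma$. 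This is the hinge on which both implications turn.

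For necessity I would take a sequence $x_1,x_2,\dots$ in the efficient $\varepsilon$-fair core and form, as in Theorem~\ref{th:Markov}, the averaged allocation $x=(1-\delta)\sum_t\delta^{t-1}x_t$ and the averaged game $u=(1-\delta)\sum_t\delta^{t-1}V(N;x_{t-1})$. Absorbing the initial term $(1-\delta)v_1$ and the one-step index shift between $x_t$ and $x_{t-1}$ into an $O(\varepsilon)$ error puts a slightly shifted average $x'$ in the $O(\varepsilon)$-core of a convex combination of the games $V(N;x_{t-1})$, just as before. Efficiency is then fed in through the identity above: because the discounted average of the worths $V(N;x_{t-1})(N)$ equals $x^N_*(N,\delta)=v_*(N,\delta)$ and each worth is at most $D$, the total discounted weight carried by stages with $x_{t-1}\notin M_\gamma$ is controlled by $D-v_*(N,\delta)$. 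Discarding those stages and renormalizing perturbs the averaged game and allocation by a quantity that continuity and boundedness of $V(N;\cdot)$ keep small, leaving $x'$ in the $\gamma$-core of a game in $\conv_{M_\gamma}V(N;x')$; letting $\varepsilon\to0$ and using compactness and continuity yields, for each small $\gamma$, the required $x$ and $v\in\conv_{M_\gamma}V(N;x)$.

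For sufficiency I would fix $\varepsilon$, pick $\gamma$ small, and take the data of the hypothesis: $x$, a split $(y_1,\dots,y_k;\alpha_1,\dots,\alpha_k)$ with each $y_j\in\conv M_\gamma$, the game $v=\sum_j\alpha_jV(N;y_j)$, and $x$ in the $\gamma$-core of $v$. Running the time-partition construction of Theorem~\ref{th:Markov} --- split the periods into blocks $T^j$ of discounted weight $\alpha_j$ as in \eqref{eq:alphadelta} and play $x_t=y_j$ on $T^j$ --- gives, by \eqref{eq1}, present allocation value $x$ and present game value within $(1-\delta)v_1+O(1-\delta)$ of $v$, so the sequence is in the $\varepsilon$-fair core once $\delta$ is close to $1$. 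Efficiency is then read off from the fact that the discounted average stage worth is $v(N)=\sum_j\alpha_jV(N;y_j)(N)$, which the $M_\gamma$-restriction pins within $O(\gamma)$ of $D$; since $v_*(N,\delta)\le D$, the constructed sequence attains the optimum up to $O(\gamma)$ and is therefore efficient after calibrating $\gamma$ against $\varepsilon$.

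The main obstacle is the coupling of efficiency with the convexification once $V(N;\cdot)(N)$ is no longer constant, and it has several intertwined faces. First, an allocation feasible at one stage need not be feasible at the next, since its total $\sum_i x_t(i)$ must equal the now-varying worth $V(N;x_{t-1})(N)$ of the preceding game; keeping every total within $O(\gamma)$ of $D$ across the block transitions, and absorbing the residual mismatches through continuity, is what makes the construction legitimate. Second, and more delicate, the split points are only required to lie in $\conv M_\gamma$ rather than in $M_\gamma$, so a priori neither $V(N;y_j)(N)$ nor $v(N)$ need be near $D$; one must argue that along the realized play the worths stay near-maximal, either by refining each $y_j=\sum_l\beta_{jl}z_{jl}$ into genuine $M_\gamma$-allocations and bounding the resulting change of averaged game by continuity, or by passing to the $\gamma\to0$ limit in which the gap between $\conv M_\gamma$ and $M_\gamma$ closes. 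Bound up with this is the fact that efficiency is a dynamic, sustainability statement --- that near-maximal worth can be kept up in the long run, so that $v_*(N,\delta)$ is forced close to $D$ --- whereas $\conv_{M_\gamma}V(N;x)$ is a static object; reconciling the two, i.e.\ showing that an $M_\gamma$-carried split lying in the $\gamma$-core both certifies and is certified by this sustainability, is the crux of the argument.
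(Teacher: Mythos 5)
The paper gives no proof of this theorem --- it states only that the argument ``is similar to the proof of Theorem~\ref{th:Markov} and is therefore omitted'' --- and your plan is indeed the intended adaptation of that proof, with the efficiency requirement routed through $M_\gamma$. To your credit, you also correctly locate where the adaptation is nontrivial. The problem is that the steps you single out as ``the crux'' are exactly the ones you do not carry out, and they are not routine.

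Concretely, two gaps. (1) In the necessity direction you bound the discounted weight of stages with $x_{t-1}\notin M_\gamma$ by $D-v_*(N,\delta)$, where $D=\sup_y V(N;y)(N)$. That bound is useful only if $v_*(N,\delta)$ is close to $D$, i.e., only if near-maximal stage worth is dynamically sustainable. Nothing in the hypotheses guarantees this: feasibility forces $x_t(N)=V(N;x_{t-1})(N)$, and it can happen that every allocation whose total equals the current worth induces a next-stage worth far below $D$ (e.g., $V(N;x)(N)=1-x(N)$ on $[0,1]$, where the worths alternate and $v_*(N,\delta)\to 1/2$ while $D=1$). In such a case an efficient sequence may place \emph{no} weight on $M_\gamma$, your ``discard and renormalize'' step discards everything, and the argument produces no element of $\conv_{M_\gamma}V(N;x)$ at all; this is not a technicality to be absorbed into an $O(\varepsilon)$ error but the point where the equivalence either holds for a structural reason or requires an additional hypothesis. (2) In the sufficiency direction, the block construction of Theorem~\ref{th:Markov} is feasible there only because $V(N;\cdot)(N)\equiv d$ makes every $y_j\in\Delta(d)$ an allocation of every stage game. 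Here $y_j\in\conv M_\gamma$ constrains $V(N;z)(N)$ for the constituents $z$ of $y_j$, not the totals $y_j(N)$, so the identity $y_j(N)=V(N;x_{t-1})(N)$ needed for $x_t=y_j$ to be a legal stage allocation is simply not available; you acknowledge the mismatch but offer no mechanism (perturbation, rescaling, or an extra assumption) that restores feasibility while keeping the averaged game close to $v$. Until both points are settled the proposal is a plan that reproduces the parts of Theorem~\ref{th:Markov} that transfer verbatim and leaves open precisely the parts that do not.
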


The proof is similar to the proof of Theorem~\ref{th:Markov} and is
therefore omitted.

\subsection{Non-emptyness of the stable core}\label{subsec
stability-Markov-core}

Recall that  $V(S;x^S)$ is the stage game played with $S$ as the
grand coalition after a stage in which the allocation was $x^S$. We
now assume that $V(S;x)(T)$ depends on $x^S(T)=\sum_{i\in T}x_i$ for
every $S$ that contains $T$ in a continuous and monotonically
increasing fashion. In particular, the worth of coalition
$T\subseteq S$ at time $t$ depends only on its total share at time
$t-1$. For every coalition $T$ and time $t$, we define $U^{t}_T(c)$,
inductively. $U^{ 1}_T(c)=V(S;x)(T)$, where $x(T)=c$. Note that this
is well defined, as $V(S;x)(T)$ depends solely on $x(T)$. Then, $U^{
t}_T(c)=U^{ 1}_T(U^{t-1}_T(c))$. Define $f_T(c)$ to be the limit of
$U^{t}_T(c)$. Due to continuity this limit exists. Thus, it
satisfies $f_T(c)= f_T(f_T(c))$. That is, $f_T(c)$ is a fixed point
of $U^1_T$ and of $f_T$.

We further assume that $f_T(c)$ is finite for every $T$ and $c$,
which in equivalent to assuming that either the set of fixed points
of $U^1_T$ is unbounded or $U^1_T(x)<x$ for $x$ sufficiently
large.\footnote{The function $U^1_T$ is continuous and monotonic. In
case the set of the fixed points of $U^1_T$ is unbounded, every non-fixed point of $U^1_T$ is between two fixed points. The set of fixed
points of $U^1_T$ is closed, and therefore for every non-fixed point
of $U^1_T$, say $c$, there are two closest fixed point, one above $c$ and one below it. The sequence $U^{ t}_T(c)$ then converges
to one of the two (depending on whether  $U^{ t}_T(c)>c$ or  $U^{
t}_T(c)<c$). If, however,  the set of fixed points of $U^1_T$ is
bounded, then the sequence $U^{ t}_T(c)$ diverges to infinity in
case $U^{ t}_T(c)>c$ asymptotically. In case  $U^{ t}_T(c)<c$
asymptotically, the sequence $U^{ t}_T(c)$ is decreasing and
$f_T(c)$ is finite.}

Let $x$ be an allocation of $v_1$ and define the characteristic
function $u_x$ as follows: $u_x(T)= f_T(x(T))$ for every coalition
$T$.

\begin{lemma} \label{lm:uniform-conver} For every $\varepsilon>0$  there
exists a time $m$ such that for every $c\le v_1(N)$, $T\subseteq N$
and an allocation $x$ of $v_1$ with $x(T)=c$,
$|u_x(T)-U^{t}_T(c)|<\varepsilon $ for every $t\ge m$.
\end{lemma}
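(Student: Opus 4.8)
The plan is to reduce the statement to a uniform convergence result for a single one-dimensional iteration and then invoke a Dini-type argument. First I would exploit finiteness: since $N$ has only finitely many subsets, it suffices to produce, for each $T\subseteq N$, a time $m_T$ that works uniformly in $c$, and then set $m=\max_T m_T$. Next observe that the allocation $x$ enters the quantity $|u_x(T)-U^t_T(c)|$ only through $c=x(T)$, because $u_x(T)=f_T(x(T))=f_T(c)$ and, by the standing assumption, $U^1_T$ (hence every iterate $U^t_T$) is a function of $x(T)$ alone. Thus the claim for a fixed $T$ is exactly that $U^t_T(\cdot)\to f_T(\cdot)$ uniformly on the set $K_T$ of attainable values $c=x(T)$ with $c\le v_1(N)$. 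Since the allocations of $v_1$ form the compact polytope $\{x:\sum_i x(i)=v_1(N),\ x(i)\ge B\}$, the set $K_T$ is a compact interval, which is the feature I would lean on.

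For fixed $T$ write $g=U^1_T$, a continuous nondecreasing map, so that $U^t_T=g^{(t)}$ is the $t$-fold iterate and $f_T(c)=\lim_t g^{(t)}(c)$ is a fixed point of $g$. For each $c$ the sequence $g^{(t)}(c)$ is monotone in $t$: it increases when $g(c)\ge c$ and decreases when $g(c)\le c$, because $g$ is nondecreasing. I would therefore split $K_T=A\cup B$ with $A=\{c\in K_T: g(c)\ge c\}$ and $B=\{c\in K_T: g(c)\le c\}$; both are closed, hence compact, and each is forward invariant under $g$ (if $g(c)\ge c$ then $g(g(c))\ge g(c)$, and symmetrically on $B$). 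On $A$ the continuous functions $g^{(t)}$ increase pointwise to $f_T$, and on $B$ they decrease pointwise to $f_T$. Provided $f_T$ is continuous on each piece, Dini's theorem upgrades both pointwise monotone convergences to uniform convergence, and a common $m_T$ is obtained by taking the larger of the two thresholds.

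The main obstacle, and the place where the dynamic content really sits, is the continuity of the limit $f_T$: a uniform limit of continuous iterates is forced to be continuous, so this cannot be bypassed. The function $f_T$ is the fixed-point selection of $g$: it is the identity on the fixed-point set $F$ of $g$ and is constant on each gap of $F$, equal to whichever endpoint the iteration is driven toward by the sign of $g-\mathrm{id}$ on that gap (exactly the ``two nearest fixed points'' dichotomy of the footnote). A jump of $f_T$ can occur only at a fixed point $q$ approached by a gap whose iterates are driven to the opposite endpoint, so the crux is to rule such configurations out on $K_T$, i.e.\ to show that under the standing hypotheses (continuity and monotonicity of $V(S;\cdot)(T)$ in $x(T)$ together with finiteness of $f_T$) the selection $f_T$ is continuous on $\{c\le v_1(N)\}$. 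I would establish this by a direct analysis of $F\cap K_T$: on the complement of an arbitrarily small neighborhood of $F$ the single iterate $g$ moves points toward $F$ by at least a fixed amount, by compactness and continuity of $g-\mathrm{id}$, which yields a uniform rate there; near each fixed point one instead controls the tail using that once an iterate lands within $\varepsilon$ of the attracting fixed point it stays within $\varepsilon$ by monotonicity. Reconciling these two regimes, in particular near fixed points that are only one-sidedly attracting, is the delicate step, and it is precisely what delivers both the continuity of $f_T$ and the desired uniform threshold $m$.
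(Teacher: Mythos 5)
Your reduction is right and efficient: finiteness of $2^N$ lets you fix $T$, the quantity $|u_x(T)-U^t_T(c)|$ depends on $x$ only through $c=x(T)$, and the attainable values of $c$ form a compact interval, so the lemma is exactly uniform convergence of the iterates $g^{(t)}=U^t_T$ to $f_T$ on that interval. The decomposition into $A=\{c: g(c)\ge c\}$ and $B=\{c: g(c)\le c\}$, their forward invariance, and the monotonicity in $t$ of $g^{(t)}(c)$ on each piece are all correct. The gap is the step you yourself flag as the crux: Dini's theorem needs the limit $f_T$ to be continuous on each piece, and under the stated hypotheses (continuity, monotonicity, finiteness of $f_T$) this is simply false, so the configurations you hope to ``rule out'' cannot be ruled out. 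If $(p,q)$ is a gap of the fixed-point set on which $g>\mathrm{id}$, then $f_T\equiv q$ on $(p,q)$ while $f_T(p)=p$, a jump at $p$; moreover $p$ and the whole gap lie inside your set $A$, so restricting to $A$ does not restore continuity. Concretely, $g(c)=\sqrt{c}$ on $[0,1]$ satisfies every standing assumption, yet $g^{(t)}(c)=c^{2^{-t}}$ converges to $f_T$ (which equals $1$ on $(0,1]$ and $0$ at $0$) non-uniformly: $\sup_c|f_T(c)-g^{(t)}(c)|=1$ for every $t$. Since a uniform limit of continuous functions is continuous, neither the Dini route nor the two-regime escape-time argument sketched in your last paragraph (whose bottleneck is exactly the unbounded time needed to escape a neighborhood of a one-sidedly repelling fixed point) can be completed without an additional hypothesis.

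For what it is worth, the paper's own proof stumbles at the same spot: it partitions $[0,v_1(N)]$ into basins $A(b)$ and asserts, without argument, that for each $b$ there is a single $m_b$ with $|b-U^t_T(a)|<\varepsilon$ for all $t\ge m_b$ and all $a\in A(b)$ --- which is precisely uniform convergence on the basin, and fails in the example above where $A(1)=(0,1]$. So you have located the genuine difficulty correctly, and your diagnosis (continuity of $f_T$ is equivalent to the uniformity being claimed) is sharper than the paper's treatment; but the proposal does not resolve it, and as stated it cannot be resolved: the lemma needs an extra assumption (e.g., that $f_T$ is continuous on the attainable range, or that $U^1_T$ has no one-sidedly repelling fixed points there) for either your argument or the paper's to go through.
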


\begin{proof} Fix $\varepsilon>0$ and a coalition $T$. Denote by $F$
the set of fixed points of $V(S;x)(T)$ in the interval $[0,
v_1(N)]$. Let $B$ be a finite subset of $F$ having the property that
for every $a\in F$ there is $b\in B$ such that $|a-b|<\varepsilon$. Since
$V(S;x)(T)$ is continuous and monotonically increasing in $x(T)$,
$f_T$ is monotonically increasing. Thus, $f_T(a)\in [b_1, b_2]$, for
every $a\in [b_1, b_2]$ with $b_1, b_2\in B$. Moreover, the distance
$|f_T(a)-U^{t}_T(a)|$ is decreasing with $t$. Denote by $A(b)$ the
set of all points that are absorbed to $b\in B$. That is,
$A(b)=\{a\colon f_T(a)=b\}.$

For every $b\in B$, there is time $m_b$ such that
$|b-U^{t}_T(a)|<\varepsilon $ for every $t\ge m_b$ and $a\in A(b)$. Let
$m_T=\max \{m_b \colon b\in B\}$. Thus, for every  $a\in [0,
v_1(N)]$, either $a\in A(b)$ for some $b\in B$, in which case
$|f_T(a)-U^{t}_T(a)|=|b-U^{t}_T(a)|<\varepsilon $ for every $t\ge m_T$, or
$|a-f_T(a)|\le |U^{t}_T(a)-f_T(a)|<\varepsilon$ for every $t$. Since there
are finitely many coalitions, $m=\max \{m_T \colon T\subseteq N\}$
satisfies the assertion of the lemma. \qed
\end{proof}

\begin{theorem} \label{th:stab} Consider a game where $V(S;x)(T)$
is continuously determined by  $x(T)$ in an increasingly monotonic
fashion. Assume furthermore, that $V(N;x)(N)=v_1(N)=1$ when
$x(N)=v_1(N)$. Then, the two following statements are equivalent:
\begin{enumerate}[{\rm (i)}]
\item\label{it:th:stab-1} For any $\varepsilon>0$ there is $\delta_0<1$ such
that for every $\delta\in [\delta_0, 1)$ the $\varepsilon$-stable
core of a game $(V,\delta)$ is not empty.

\item\label{it:th:stab-2} For every $\varepsilon>0$ there exists an allocation $x$ of
$v_1$ such that the  $\varepsilon$-core of $u_x$ is not empty.
\end{enumerate}
\end{theorem}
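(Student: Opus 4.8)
\section*{Proof proposal for Theorem~\ref{th:stab}}

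The plan is to prove the two implications separately, using throughout the asymptotic identification of the deviation value $v^h_*(S,\delta)$ with the limit game $u_{x_h}$. First I would record, for any sequence $x_1^N,x_2^N,\dots$, that a coalition $S$ deviating at time $h$ does best by reinvesting its entire worth each period (the stage worth is monotone increasing in the previous total share), so that its optimal continuation produces the worth stream $x_h(S),U^1_S(x_h(S)),U^2_S(x_h(S)),\dots$ and hence
\[
v^h_*(S,\delta)=(1-\delta)\Big[\delta^{h-1}x_h(S)+\sum_{k\ge 1}\delta^{h+k-1}U^k_S(x_h(S))\Big].
\]
By Lemma~\ref{lm:uniform-conver} the tail $U^k_S(x_h(S))$ is within $\varepsilon$ of $f_S(x_h(S))=u_{x_h}(S)$ once $k\ge m$, with $m$ independent of $h$ and $S$ over the relevant range of shares. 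Abel summation then yields $v^h_*(S,\delta)=\delta^{h-1}u_{x_h}(S)+o(1)$ as $\delta\to1$, uniformly in $h$ and $S$. I would also note that both $x^h_*(S,\delta)$ and $v^h_*(S,\delta)$ are bounded by $\delta^{h-1}M$ for a uniform $M$, so the stable-core inequality at stage $h$ is automatic once $\delta^{h-1}M<\varepsilon$; the content of the argument lives at small $h$.

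For (\ref{it:th:stab-1})$\Rightarrow$(\ref{it:th:stab-2}) I would fix a small $\varepsilon$, take $\delta$ close to $1$ and a sequence $x_1^N,x_2^N,\dots$ in the $\varepsilon$-stable core, and read off only the inequality at $h=1$. Writing $z=(1-\delta)\sum_{t\ge1}\delta^{t-1}x_t$ for the discounted-average allocation and $x=x_1$, the asymptotics above turn $x^1_*(S,\delta)\ge v^1_*(S,\delta)-\varepsilon$ into $z(S)\ge u_{x}(S)-O(\varepsilon)$ for every $S$, while $z(N)=u_x(N)=f_N(1)=1$ by the normalization $V(N;\cdot)(N)=1$. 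Thus $z$ lies in the $O(\varepsilon)$-core of $u_x$. Letting $\delta\to1$ and $\varepsilon\to0$ and using compactness of the allocation simplex together with continuity of $V$ (whence of $c\mapsto f_T(c)$), I would extract limiting allocations realizing statement (\ref{it:th:stab-2}).

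For the converse (\ref{it:th:stab-2})$\Rightarrow$(\ref{it:th:stab-1}) I would fix $x$ with nonempty $\varepsilon$-core of $u_x$, pick $z$ in that core, and build a sequence whose discounted-average shares reproduce $z$ while its per-period deviation worths stay near $u_x$. The guiding identity is that $u_x(S)=f_S(x(S))$ is a fixed point of $U^1_S$, so $f_S$ is constant on the basin of that fixed point; the aim is to choose each $x_h$ so that, for every $S$, the current share $x_h(S)$ remains in that basin — forcing $v^h_*(S,\delta)\approx\delta^{h-1}u_x(S)\le\delta^{h-1}\bigl(z(S)+\varepsilon\bigr)$ — while the discounted average of the $x_h(S)$ equals $z(S)\ge u_x(S)-\varepsilon$. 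With the stage-$1$ game washed out by the factor $(1-\delta)$, exactly as in the proof of Theorem~\ref{th:Markov}, this would give membership in the $O(\varepsilon)$-stable core for all $\delta$ beyond a threshold $\delta_0$.

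The hard part is precisely this simultaneous control of the deviation values at every stage $h$ and for every coalition $S$ with a single discount factor. The inequality at $h=1$ only pins the average share against $u_{x_1}$, but the stable core demands the analogue at all $h\ge2$, where the relevant deviation worth is governed by $u_{x_h}$ rather than $u_x$; since the allocation that fixes the deviation worth and the target average share need not coincide, one cannot merely play $z$ forever (that would require $z$ to lie in the core of $u_z$, not just of $u_x$). I expect the resolution to be to keep each coalition's running share inside the basin of its limit fixed point, using the idempotency $f_S\circ f_S=f_S$ to absorb the discrepancy, together with two uniformities: the uniform absorption time $m$ of Lemma~\ref{lm:uniform-conver}, which makes $v^h_*(S,\delta)\approx\delta^{h-1}u_{x_h}(S)$ hold for all $h$ at once, and the decay of $\delta^{h-1}M$ below $\varepsilon$, which disposes of all but finitely many stages.
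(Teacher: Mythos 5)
Your forward direction \eqref{it:th:stab-1}$\Rightarrow$\eqref{it:th:stab-2} is essentially correct but takes a genuinely different and lighter route than the paper's. You read off only the stable-core inequality at $h=1$: since the deviating coalition's total stage worth evolves deterministically as $U^k_S(x_1(S))$ (its stage allocations are efficient by definition, so there is in fact nothing to optimize for the total), Abel summation plus the uniformity in Lemma~\ref{lm:uniform-conver} give $v^1_*(S,\delta)=u_{x_1}(S)+o(1)$, and the discounted average $z$ lands in the $O(\varepsilon)$-core of $u_{x_1}$ --- which is literally statement \eqref{it:th:stab-2}. The paper instead takes an accumulation point $x_*$ of the tail averages $x^h_*$ and uses Lemma~\ref{prop stab-core} to locate a stage $h$ at which the current share $x_h(T)$ is near $x_*(T)$ from below while the promised continuation $x^h_*(T)$ is near $x_*(T)$ from above; a deviation at $h+1$ then shows that $x_*$ \emph{itself} lies in the $5\varepsilon$-core of $u_{x_*}$. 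The extra work buys a strictly stronger, self-referential conclusion: the witness allocation and the core element coincide.

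That difference is exactly where your proposal has a genuine gap. For \eqref{it:th:stab-2}$\Rightarrow$\eqref{it:th:stab-1} you correctly diagnose that knowing only that some $z\neq x$ lies in the $\varepsilon$-core of $u_x$ is not enough: playing $z$ forever would require $z$ to be in the core of $u_z$, while playing $x$ forever yields average shares $x(S)$ rather than $z(S)$. Your proposed fix --- keeping every running share $x_h(S)$ in the basin of the fixed point $f_S(x(S))$ --- does not obviously work, because imposing this for all coalitions $S$ simultaneously pins $x_h$ near $x$ and hence pins the discounted average near $x$, not near $z$; the discrepancy you hope the idempotency of $f_S$ will absorb is precisely the discrepancy between $x$ and $z$, and nothing absorbs it. The paper omits this direction as ``relatively easy,'' which it is only for the self-referential form of \eqref{it:th:stab-2} that its forward argument actually delivers (then one plays the fixed allocation forever and every deviation value is approximately $u_x(S)\le x(S)+\varepsilon$). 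To close your proof you should either upgrade your forward direction to produce an $x$ lying in the $O(\varepsilon)$-core of $u_x$ itself --- i.e., reinstate the accumulation-point argument and Lemma~\ref{prop stab-core} --- or supply a genuinely new construction for the converse starting from the weak form of \eqref{it:th:stab-2}; as written, the equivalence is not established.
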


Before we proceed to the proof of this theorem, we need an auxiliary
result. Let $a_1,a_2,\dots $ be a bounded sequence of numbers. For
any integer $h$ denote, $a_*^{h,\delta}= (1-\delta)\sum_{t=h}^\infty
\delta^{t-h}a_{t}$. The proof of Theorem \ref{th:stab} uses the
following lemma.

\begin{lemma} \label{prop stab-core} For every $\delta<1$
large enough and every bounded sequence of numbers  $a_1,a_2,\dots $
such that $\{a_*^{h,\delta}\}_h$ has an accumulation point $a\ge 0$,
and every $\gamma>0$ there is a time $h$ such that $a_h> a-\gamma$,
while $a_*^h< a+\gamma$.
   \end{lemma}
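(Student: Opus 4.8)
The plan is to reduce everything to one identity relating the flow value $a_h$ to the discounted continuation values. Splitting off the first term gives the one-step recursion $a_*^h=(1-\delta)a_h+\delta a_*^{h+1}$ (here and below I write $a_*^h$ for $a_*^{h,\delta}$), and solving for $a_h$ yields the key identity
\[
a_h=a_*^h+\frac{\delta}{1-\delta}\bigl(a_*^h-a_*^{h+1}\bigr).
\]
Two consequences drive the proof. First, at any time $h$ where the continuation value does not increase, i.e. $a_*^{h+1}\le a_*^h$, we have $a_h\ge a_*^h$. Second, writing $K=\sup_t|a_t|$ and using $|a_*^{h+1}|\le K$, the recursion gives $|a_*^{h+1}-a_*^h|=(1-\delta)|a_*^{h+1}-a_h|\le 2(1-\delta)K$, so $(a_*^h)_h$ varies slowly. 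Taking $\delta$ close enough to $1$ (this is the role of ``$\delta$ large enough'') I may assume the per-step increment is bounded by some $\rho<\gamma$.

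With this in hand it suffices to produce a single time $h$ with $a_*^h\in(a-\gamma,a+\gamma)$ and $a_*^{h+1}\le a_*^h$: the first consequence then gives $a_h\ge a_*^h>a-\gamma$, while $a_*^h<a+\gamma$ is immediate. I therefore split according to whether such ``non-increasing'' times occur near $a$. If there are infinitely many $h$ with $a_*^h\in(a-\gamma,a+\gamma)$ and $a_*^{h+1}\le a_*^h$, I take any one of them and I am done.

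The remaining (hard) case is that only finitely many such $h$ exist, so that beyond some time every visit of $(a_*^h)$ to the band $(a-\gamma,a+\gamma)$ is followed by a strict increase; the band is then a one-way upward region and the argument becomes dynamical. Since increments are at most $\rho<\gamma$, once the trajectory enters the band it cannot leave through the bottom, so it strictly increases until either (i) it stays in the band forever, in which case it is eventually monotone and bounded, hence convergent, and because $a$ is an accumulation point the limit must be $a$; then $a_*^h-a_*^{h+1}\to0$ and the identity forces $a_h\to a$, so any large $h$ works; or (ii) it exits through the top, reaching a value $\ge a+\gamma$. I claim (ii) cannot happen: a floor argument shows that once $a_*^s\ge a+\gamma$, the slow-variation bound together with the one-way property keeps $a_*^t\ge a+\gamma-\rho>a$ for all later $t$ (if $a_*^t\ge a+\gamma$ then $a_*^{t+1}\ge a_*^t-\rho\ge a+\gamma-\rho$, and if $a_*^t\in[a+\gamma-\rho,a+\gamma)$ it lies in the band and hence increases). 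Then $(a_*^t)$ is bounded away from $a$ for all large $t$, contradicting that $a$ is an accumulation point. Hence (i) must hold, proving the lemma.

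I expect the floor argument in case (ii) to be the crux: the bookkeeping around the identity is routine, but the assertion that the trajectory can never fall back toward $a$ after escaping the band is what genuinely uses both the slow variation ($\rho<\gamma$) and the one-way property, and it is precisely what turns ``accumulates at $a$'' into a contradiction. I also note that the hypothesis $a\ge0$ is not needed for this argument.
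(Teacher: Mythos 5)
Your proof is correct, but it takes a genuinely different route from the paper's. The paper works with a finite window: it picks $h_1<h_2$ with $a_*^{h_1,\delta}$ and $a_*^{h_2,\delta}$ both within $\gamma/2$ of $a$ and with $\delta^{h_2-h_1}$ small, shows by an averaging argument that some $h\in[h_1,h_2]$ has $a_h\ge a-\gamma$, takes the \emph{largest} such $h$, and then bounds $a_*^{h,\delta}$ from above by splitting it into the single term $(1-\delta)a_h$, the intermediate flows (all below $a-\gamma$ by maximality of $h$), and the tail beyond $h_2$ (close to $a$). You instead exploit the one-step recursion $a_*^h=(1-\delta)a_h+\delta a_*^{h+1}$ and run a trajectory analysis on the continuation-value sequence: either some visit to the band $(a-\gamma,a+\gamma)$ is non-increasing, in which case the identity immediately gives $a_h\ge a_*^h$, or the band is a one-way-up region, and then the slow-variation bound $2(1-\delta)K<\gamma$ forces the trajectory either to converge to $a$ from inside the band (so $a_h\to a$ and any late $h$ works) or to be trapped above $a+\gamma-\rho>a$ forever, contradicting accumulation at $a$. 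Both proofs require $\delta$ large relative to $\gamma$ and to the bound on the sequence (the paper needs $(1-\delta)M<\gamma/2$; you need $2(1-\delta)K<\gamma$), so neither is more uniform than the other. What your version buys is a cleaner structural picture --- the dichotomy isolates exactly where the accumulation hypothesis enters --- and, as you observe, it makes transparent that $a\ge 0$ plays no role; the paper's proof is shorter and avoids the induction in your floor argument, at the cost of some bookkeeping with the window weights. One small logical remark: the natural case split is ``at least one non-increasing visit'' versus ``none after some time'' --- a single good $h$ already suffices, so your ``infinitely many'' phrasing is harmless but redundant.
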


\begin{proof} For every $h_1<h_2$,
\begin{equation}\label{eq sum}
a_*^{h_1,\delta}=(1-\delta)\sum_{t= h_1}^{h_2-1}\delta^{t-h_1}a_t
+(1-\delta)\left(1-\sum_{t= h_2}^{\infty}\delta^{t-h} \right)a_*^{h_2,\delta}.
\end{equation}
We assume that the sequence $a_1,a_2,\dots $ is bounded by $M\ge 1$.
Since $a$ is an accumulation point, there are $h_1<h_2$ such that
\[
\sum_{t= h_1}^{h_2-1}\delta^{t-h_1}>1- \frac{\gamma}{2M} \ \text{ and }
|a_*^{h,\delta}-a|< \frac{\gamma}{2},\ \text{ for }\ h=h_1,h_2.
\]

We consider the greatest $h$,  $h_1 \le h \le h_2$  such that
$a_h\ge a-\gamma$. There exists such $h$ because if all $h$ between
$h_1$ and $h_2$ satisfy
$a_h< a-\gamma$, then 
\[
a_*^{h_1}< \left(1-\frac{\gamma}{2M})(a-\gamma\right)+\frac{\gamma}{2M}M<a-\frac{\gamma}{2},
\]
which contradicts the choice of $h_1$.

As for $a_*^h$, \eqref{eq sum} applied to $h=h_1$ and to $h_2$
implies that
\begin{align*}
a_*^{h,\delta}&=(1-\delta)\sum_{t= h}^{h_2-1}\delta^{t-h}a_t
+(1-\delta)\left(1-\sum_{t= h_2}^{\infty}\delta^{t-h} \right)a_*^{h_2,\delta}\\
&\le (1-\delta)a_h+(1-\delta)\sum_{t=
h+1}^{h_2-1}\delta^{t-h}(a-\gamma)+ (1-\delta)\left(1-\sum_{t=
h_2}^{\infty}\delta^{t-h} \right)(a+\gamma/2).
\end{align*}
When $\delta$ is large enough, $(1-\delta)a_h<\gamma/2$. Thus,
$a_*^{h, \delta}\le \gamma/2 +\delta (a+\gamma/2)\le a+\gamma,$ as
desired.  \qed
\end{proof}

\begin{proof} [Proof of Theorem \ref{th:stab}] We assume without
loss of generality that $v_1(N)=1$. We prove that
\eqref{it:th:stab-1} implies \eqref{it:th:stab-2}. We assume that
for any $\varepsilon>0$ the $\varepsilon$-stable core of
$(V,\delta)$ is not empty for $\delta$ sufficiently large. Fix
$\varepsilon>0$ and assume that $1-\varepsilon<\delta$.

Let $m$ be the one guaranteed by Lemma \ref{lm:uniform-conver} and
$\varepsilon$. Suppose that the discount factor $\delta$ is large enough so
the total payoff of any coalition during $m$ periods could not
exceed $\varepsilon$.

Let $x_1,x_2,\dots$ be in the $\varepsilon$-stable core of
$(V,\delta)$ and let $x_*$ be an accumulation point of the sequence
$x^h_*= x^h_*(\delta)=(1-\delta)\sum_{t=h}^\infty \delta^{t-h} x_t,~
h=1,2. \dots $ (it exists because, by assumption, the sequence of
allocations is bounded).

By assumption, $x_1(N)=x_2(N)=\dots$. We will show that $x_*$ is in
the $5\varepsilon$-core of $u_{x_*}$. If not, then there is a coalition $T$
such that $x_*(T)<u_{x_*}(T)-5\varepsilon.$ In particular, $x_*(T)$ is not
a fixed point of $f_T$. Since for every time $h$, $ x^h_*$ is an
average of $x_h,x_{h+1},\dots$, we have that $
x(N):=x^h_*(N)=V(N,x_t)(N)=v_1(N)$ (the last equality is by
assumption)
for every period $t$, implying that $x_*$ is
an allocation of $v_1$.

Since the set of fixed points of $f_T$ is closed, and $x_*(T)$ is not
a fixed point of $f_T$, one can find $\beta>0$ such that

 \begin{equation}\label{eq
beta-eps}x^h_*(T)>x_*(T)-\beta \ \text{ implies }  \
    f_T(x^h_*(T))\ge f_T(x_*(T)).
\end{equation}

The reason is that when $x^h_*(T)$ is not a fixed point of $f_T$,
there is an open interval around $x^h_*(T)$, whose points
have all the same range as $x^h_*(T)$. That is, $f_T$ is fixed around
$x^h_*(T)$. Thus, there is $\beta>0$ such that
$x^h_*(T)>x_*(T)-\beta$ implies $ f_T(x^h_*(T))= f_T(x_*(T))$. In
case $x^h_*(T)>x_*(T)$, then by monotonicity $f_T(x^h_*(T))\ge
f_T(x_*(T))$ and therefore, (\ref{eq
beta-eps}).

Applying Lemma~\ref{prop stab-core} to the sequence $x_1(T), x_2(T),
\dots$  and the accumulation point $x_*(T)$, we obtain that for
$\gamma=\min(\beta,\varepsilon)$, there is a time $h$ such that
\begin{equation}\label{eq 3}
 x_h(T)> x_*(T)-\gamma \ \text{ and } \  x^h_*(T)< x_*(T)+\gamma.
\end{equation}
 In words, the instantaneous  payoffs of coalition $T$ at time $h$ is greater than  $x_*(T)-\gamma$, while the present value of
coalition $T$'s payoff at time $h+1$ is less than  $x_*(T)+\gamma$.

We now describe a deviation of coalition $T$. At time $h+1$
coalition $T$ deviates and plays the game $V(T,x_h(T))$. From time
$h+2$ on, any allocation of the stage-game is fine. After $m$
periods the worth of the coalition $T$ in the stage-game, is by
Lemma \ref{lm:uniform-conver}, close to $u_{x_h}(T)$ up to $\varepsilon$.
Since the first $m$ periods after $h$ contribute at most $\varepsilon$ to
the entire present value of coalition $T$, the payoff (discounted to
time $h+1$) for $T$ due to the deviation is at least
$u_{x_h}(T)-2\varepsilon$.

By (\ref{eq 3}), $x_h(T)> x_*(T)-\gamma \ge x_*(T)-\beta$. Because of (\ref{eq beta-eps}),  $u_{ x_h}(T)\ge u_{
x_*}(T)$. Thus, due to the deviation, the payoff of coalition $T$ is
at least $u_{x_*}(T)-2\varepsilon$. However, the sequence
$x_1,x_2,\dots$ is in the $\varepsilon$-stable core of
$(V,\delta)$, and therefore, by deviating coalition $T$ cannot get more than $\varepsilon$ beyond the originally planned payoff  ($x^{h+1}_*(T)$). Thus,
\begin{equation}\label{eq 1}
    u_{x_*}(T)-2\varepsilon\le x^{h+1}_*(T)+\varepsilon.
\end{equation}

Again from (\ref{eq 3}) we have $x^h_*(T)< x_*(T)+\gamma \le x_*(T)+\varepsilon$. Thus,
\begin{equation}\label{eq 2}
x^{h+1}_*(T) \le x^h_*(T)+\varepsilon \le x_*(T)+2\varepsilon.
\end{equation}

The first inequality of (\ref{eq 2}) is due to the fact that $1-\delta<\varepsilon$ (as assumed at the beginning of the proof) and
$x_h(T)\le v_1(N)=1$. Hence, by (\ref{eq 1})
and (\ref{eq 2}), $u_{x_*}(T)-2\varepsilon\le x^{h+1}_*(T)+\varepsilon\le
x_*(T)+3\varepsilon.$ It implies that $u_{x_*}(T)-5\varepsilon \le x_*(T).$ This
is a contradiction, and therefore $x_*$ is the $5\varepsilon$-core of
$u_{x_*}$.

The proof that \eqref{it:th:stab-2} implies \eqref{it:th:stab-1} is relatively easy and is therefore
omitted.  \qed
\end{proof}

\begin{remark} Assertion \eqref{it:th:stab-2} of Theorem \ref{th:stab} states that
for every $\varepsilon>0$ there exists an allocation $x$ of $v_1$ such that
the  $\varepsilon$-core of $u_x$ is not empty. Due to lack of continuity it
is impossible to conclude that there exists an allocation $x$ of
$v_1$ such that the core of $u_x$ is not empty. This is so because
when a sequence of allocations of $v_1$, say $(x_k)_k$ (each in the
$\varepsilon$-core of the respective $u_{x_k}$), is converging to $x$,
there is no guarantee that $u_{x_k}$ converges to $u_x$.
\end{remark}

\section{The credible core}\label{sec def-core-credible}

The third type of core we are about to define is close in spirit to
subgame perfect equilibrium in the theory of non-cooperative games.
A dissatisfied coalition may deviate at any time in which future
allocations guarantee less than it can do alone. Hence, stability
conditions must be preserved not only at the beginning of the game,
but throughout the entire game. But when creating its own game, a
coalition, say $S$ may face a threat from one of its sub-coalitions,
say $T$. The game established by  $T$ may depend on the entire history
of allocations, starting from the grand coalition allocation at the
beginning of game, continuing with $S$-allocation and ending with
allocations of its own.

The game may start with the grand coalition, run this way for a
while and only then coalition $S_1$ may decide to deviate, run for a
while and then coalition $S_2$ may deviate etc. Thus, histories now
consist of $x^{S_1}_1,x^{S_2}_2,\dots,x^{S_t}_t $, where $S_1 =N$
and the sequence of $S_{\ell}$ is decreasing (w.r.t. inclusion). A
history $x^{S_1}_1,x^{S_2}_2,\dots,x^{S_t}_t $ is feasible if at any
time $t$, $x^{S_t}_t$ is an allocation of the stage game
$V^{S_t}(x^{S_{t-1}}_{t-1}). $

A central planner has an allocation policy, denoted $\sigma$, that
associates with any time and any possible instantaneous game an
allocation. In other words, the central planner has a full
contingency plan as to how the available cake should be split at any
point in time. Formally, the \emph{allocation policy} $\sigma$ is
such that $\sigma(x^{S_t}_t)$ is an
allocation of the game
$V(S_{t+1};x^{S_t}_t)$.

After any history of allocations,
$x^{S_1}_1,x^{S_2}_2,\dots,x^{S_t}_t$, an {allocation policy}
$\sigma$ determines uniquely a continuation stream of allocations:
$x^{S_t}_{t+1}=\sigma(x^{S_t}_t)$,
$x^{S_t}_{t+2}=\sigma(x^{S_t}_{t+1})$,
etc. We denote this continuation by
$C_{\sigma}(x^{S_1}_1,x^{S_2}_2,\dots,x^{S_t}_t)$.

\begin{definition}[Credible core]
An allocation policy $\sigma$ is in the \emph{credible core with
discount factor $\delta$} of $(V,\delta)$ if for every history
$x^{S_1}_1,x^{S_2}_2,\dots,x^{S_t}_t$ the sequence
$C_{\sigma}(x^{S_1}_1,x^{S_2}_2,\dots,x^{S_t}_t)$ of allocations is
in the stable core of
$V(S_{t+1};x^{S_t}_t)$.
\end{definition}

A similar idea has been used in a non-cooperative context by  \citet{BerPelWhi:JET1987, BerWhi:JET1987}. These authors consider Nash equilibria that are immune from deviations not just of single players but also of coalitions. Not every deviation is acceptable, though: a deviation of some coalition has to be in turn immune from deviations of sub-coalitions.

The main result of this section shows that a form of the \emph{one-deviation principle} holds for the credible core. This principle goes back to \citet{Bla:AMS1965} and has been widely used in extensive form noncooperative games. Recently  \citet{Var:ACE352008} applied it in the study of coalition formation in a cooperative context.

\begin{definition} A coalition $S\subsetneqq S_t$ has a \emph{profitable
one-deviation} after the history
$x^{S_1}_1,x^{S_2}_2,\dots,x^{S_t}_t$ from $\sigma$, if there is an
allocation $ x^{S}_{t+1}$ of
$V(S;x^{S_t}_t)$ such that the present
value for coalition $S$ of the sequence that starts at $
x^{S}_{t+1}$ and continues with $
C_{\sigma}(x^{S_1}_1,x^{S_2}_2,\dots,x^{S_t}_t, x^{S}_{t+1})$ is
greater than the present value for $S$ of the planned sequence
$C_{\sigma}(x^{S_1}_1,x^{S_2}_2,\dots,x^{S_t}_t)$.
\end{definition}

\begin{theorem}[The one-deviation principle]\label{th:one-deviation} An allocation policy $\sigma$ is
in the credible core if and only if after every history
$x^{S_1}_1,x^{S_2}_2,\dots,x^{S_t}_t$ and for every $S\subsetneq
S_t$ there is no profitable one-deviation.
\end{theorem}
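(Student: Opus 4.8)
The plan is to prove the two implications separately, the forward one being immediate and the converse carrying the weight. The forward implication is direct: if $\sigma$ is in the credible core, then after any history $H = x^{S_1}_1,\dots,x^{S_t}_t$ the continuation $C_\sigma(H)$ lies in the stable core of $V(S_{t+1};x^{S_t}_t)$, so no proper sub-coalition $S \subsetneq S_t$ can improve on $C_\sigma(H)$ by splitting off and choosing \emph{any} continuation of its own. A one-deviation---splitting off at stage $t+1$ with some $x^S_{t+1}$ and thereafter following $\sigma$---is one particular such continuation, so no profitable one-deviation can exist.

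For the converse I would argue by contraposition. If $\sigma$ is not in the credible core there is a history $H$ after which some $S \subsetneq S_t$, by splitting off at a stage $h$ and then following a continuation $z^S_{h+1}, z^S_{h+2},\dots$ of its own choosing, beats $C_\sigma(H)$ by some $\eta>0$. The goal is to replace this unbounded deviation by a single-stage one. Discounting is the engine: since per-stage payoffs are bounded and carry weight $(1-\delta)\delta^{t-1}$, the part of the deviation beyond a finite horizon $T$ contributes less than $\eta/2$, so grafting the $\sigma$-prescribed continuation onto the deviation after $T$ preserves profitability and leaves a deviation differing from $\sigma$ at only finitely many stages. I then induct backward on the last such stage: there the coalition plays one allocation and thereafter follows $\sigma$, which is exactly a one-stage change from $\sigma$ at the node reached just before it; if such changes were never profitable, replacing this one by $\sigma$'s prescription could not lower the present value, strictly shrinking the set of discrepant stages while keeping the gain positive. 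Iterating peels off discrepancies until only the initial split-off remains, and if even that were unprofitable as a one-deviation we would be left with $C_\sigma(H)$ and a contradiction with $\eta>0$; hence a profitable one-deviation must occur somewhere along the way. Equivalently, discounting makes the map ``deviate for one stage, then continue under $\sigma$'' a $\delta$-contraction, so unimprovability forces the $\sigma$-value to dominate every alternative, and I would likely present this Bellman form to keep the estimates clean.

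The step I expect to be the main obstacle is the bookkeeping forced by the \emph{changing grand coalition}: once $S$ splits off it becomes the grand coalition, every later node is governed by $V(S;\cdot)$, and ``follow $\sigma$'' means follow $\sigma$'s prescription for grand coalition $S$, so the whole backward induction must live inside the $S$-subgame. Worse, the one-stage changes peeled off \emph{after} the split are instances of $S$ re-optimizing its own stream as the grand coalition, not of a proper sub-coalition splitting; discarding them therefore rests on $\sigma$ being intertemporally optimal for grand coalition $S$---the efficiency clause of the stable core---rather than on the proper-sub-coalition one-deviations named in the statement. This is harmless here: when the worth of the grand coalition does not depend on the allocation (as in the stable-core setting) every continuation yields $S$ the same total and those intermediate changes are vacuous, so only the split-off survives as a genuine, and hence profitable, proper-sub-coalition one-deviation; otherwise the same step is exactly the single-agent one-deviation principle applied to $S$. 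I would state this reduction explicitly, so that the final contradiction is pinned on a one-deviation of precisely the form quantified in the theorem.
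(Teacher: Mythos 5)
Your argument is correct and follows essentially the same route as the paper's: the forward direction is immediate, and for the converse both proofs use discounting to truncate a profitable deviation to finitely many stages and then a minimality/backward-induction step that isolates a profitable change at a single stage. The subtlety you flag --- that the single-stage changes peeled off after the split are re-optimizations by $S$ acting as the new grand coalition rather than proper sub-coalition one-deviations in the sense of the definition --- is genuine, and the paper's proof (which takes the shortest deviation securing the gain $a$ and extracts its last stage) passes over it silently, so your explicit treatment is, if anything, more careful than the published argument.
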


\begin{proof} The `only if' direction is trivial. For the `if' part, assume
that after every history $x^{S_1}_1,x^{S_2}_2,\dots,x^{S_t}_t$ and
for every $S\subsetneq S_t$ there is no profitable one-deviation.
Suppose that there is a coalition $S\subsetneq S_t$ that has a
profitable deviation after the sequence
$x^{S_1}_1,x^{S_2}_2,\dots,x^{S_t}_t$. Denote the gain by $a>0$. By
continuity we may assume that this deviation consists of finite
number stages. Consider the shortest deviation of $S$ after
$x^{S_1}_1,x^{S_2}_2,\dots,x^{S_t}_t$ that guarantees a gain of at
least $a$. This deviation is $x^{S}_{t+1},\dots,x^{S}_{t+\ell}$. It
implies that the deviation $x^{S}_{t+1},\dots,x^{S}_{t+\ell-1}$
guarantees $S$ less than $a$, meaning that the single deviation of
$S$ to $x^{S}_{t+\ell}$ after the entire history
$x^{S_1}_1,x^{S_2}_2,\dots,x^{S_t}_t, x^{S}_{t+1},\dots
,x^{S}_{t+\ell-1}$ makes a positive gain for $S$. This shows that
there is no profitable one-deviation, as this direction of the
theorem claims.   \qed
\end{proof}

Note that the theorem does not say that when the allocation policy
$\sigma$ is in the credible core, the instantaneous allocations are
in the respective stage game.

\begin{remark}
Theorem~\ref{th:one-deviation} holds also under a richer non-Markovian dynamic, where $V$ depends on the entire history of allocations and not only on the last allocation.
\end{remark}

\section{Final remarks}

We close the paper with some additional comments.

\subsubsection*{More on the non-emptyness of the core}
Consider a game where $V(N;\cdot)(N)$ is constant. In addition
assume that for every coalition $S$, $V(N;\cdot)(S)$ is bounded. The
$\varepsilon$-stable core with discount factor $\delta$ is not empty if
and only if the core of $v_{*}(\cdot,\delta) $ is non-empty.

\subsubsection*{No-short assumption} Throughout the paper we assumed
that the stage allocation, $x^S_t$, satisfies two assumptions.
First, the allocation of player $i$ is at least her $v(i)$, that is,
$x^S_t(i)\ge B$, and second, $x^S_t$ is locally efficient, that is,
$x^S_t(S)=v^S_t(S)$. This assumption assumes that the inter-temporal
transfers are limited. That is $B$ might be well below $v_t(i)$ at a
certain moment, but since the overall payoff in the entire dynamic
game needs to be individually rational, the instantaneous payoffs
need to be sometimes higher than the stage IR level. The technical
advantage of these assumptions is that the set of possible
allocations at any stage is compact.

\subsubsection*{Random games} We analyze games where the instantaneous
game depends deterministically on the history. The issue of
stochastic dynamic game where the stage games are endogenously
determined remains open for further studies.

\subsection*{Acknowledgements}
The authors thank Sandro Brusco for helpful comments and relevant references.

\bibliographystyle{artbibst}  
\bibliography{bibdynamic}

\end{document}